\newtheorem{theorem}{Theorem}
\newtheorem{lemma}[theorem]{Lemma}
\newtheorem{corollary}[theorem]{Corollary}
\newtheorem*{claim*}{Claim}
\newtheorem*{remark*}{Remark}
\newtheorem*{definition*}{Definition}
\renewcommand\section{%
  \@startsection{section}{1}
                {\z@}%
                {-3.5ex \@plus -1ex \@minus -.2ex}%
                {2.3ex \@plus.2ex}%
                {\large\bfseries}
}
\renewcommand\subsection{%
  \@startsection{subsection}{2}
                {\z@}%
                {-3.25ex\@plus -1ex \@minus -.2ex}%
                {1sp}
                {\normalsize\bfseries}
}
\renewcommand\subsubsection{%
  \@startsection{subsubsection}{3}
                {\z@}%
                {-3.25ex\@plus -1ex \@minus -.2ex}%
                {1sp}
                {\normalfont\normalsize}
}
\title{{\LARGE\bf  {O}n the Convexity of Independent Set Games\thanks{This work was supported in part by National Natural Science Foundation of China (11871442, 11971447) and Fundamental Research Funds for the Central Universities (201713051, 201964006).}}}
\author{Han Xiao\thanks{Corresponding author. Email: hxiao@ouc.edu.cn.},~~Yuanxi Wang,~~Qizhi Fang}
\affil{School of Mathematical Sciences\\Ocean University of China\\Qingdao, China}
\date{}
\begin{document}


\maketitle

\openup 1.2\jot


\begin{abstract}
Independent set games are cooperative games defined on graphs,
where players are edges and the value of a coalition is the maximum cardinality of independent sets in the subgraph defined by the coalition.
In this paper, we investigate the convexity of independent set games, as convex games possess many nice properties both economically and computationally.
For independent set games introduced by Deng et al.\,\cite{DengIbar99},
we provide a necessary and sufficient characterization for the convexity, i.e., every non-pendant edge is incident to a pendant edge in the underlying graph.
Our characterization immediately yields a polynomial time algorithm for recognizing convex instances of independent set games.
Besides, we introduce a new class of independent set games and provide an efficient characterization for the convexity.

\hfill

\noindent\textbf{Keywords:} Cooperative game theory, independent set, convexity.

\noindent\textbf{Mathematics Subject Classification:}  05C57, 91A12, 91A43, 91A46.
\hfill

\hfill
\end{abstract}

\section{Introduction}

Cooperative games form an important class in game theory, which have a lot of applications in economics, computer science and mathematics.
One major problem in cooperative games is to maximize the profit by cooperation and allocate the profit among participants in the cooperation.
There are many criteria for evaluating how ``good''  an allocation is, such as fairness and stability.
Emphases on different criteria lead to different solution concepts, e.g., the core, the Shapley value, the nucleolus, the bargaining set, and the von Neumann-Morgenstern solution  \cite{Shub81}.
Among those solution concepts, the core which addresses the issue of stability is one of the most attractive solution concepts.
The core is the set of allocations where no coalition has an incentive to split off from the grand coalition, and does better on its own.
Hence a game with a non-empty core is especially interesting.
A special subclass of games with a non-empty core is formed by convex \footnote{concave for cooperative games minimizing the cost by cooperation and allocating the cost among participants in the cooperation.} games.

Convex games were introduced by Shapley \cite{Shap71}, which exhibit many desirable properties in cooperative game theory.
In particular,
(i) the core is always non-empty and a core allocation can be found in polynomial time \cite{Shap71};
(ii) testing whether an allocation belongs to the core can be performed in polynomial time \cite{GLS93};
(iii) computing the nucleolus can be done in polynomial time \cite{Kuip96};
(iv) there is an appealing snowball effect, i.e., the incentive to join a coalition increases as the coalition grows larger  \cite{Shap71}.
We refer to \cite{MascPele71, Shap71} for many interesting properties of convex games involving other solution concepts.
Hence the convexity of cooperative games has attracted a lot of research efforts,
especially for cooperative games arising from combinatorial optimization problems.
However, only a few combinatorial optimization games are universally convex \cite{AumaMasc85,CuriPede89,HameBorm95,LittOwen73}.
Hence one working direction is to characterize the condition for convexity.
There is a line of research where convexity/concavity of cooperative games is characterized by the property of underlying graphs.
Van den Nouweland and Borm \cite{vanBorm91} showed that communication vertex games are convex if and only if the underlying graph is cycle-complete and communication arc games are convex if and only if the underlying graph is cycle-free.
Herer and Penn \cite{HerePenn95} showed that Steiner traveling salesman games are concave if the underlying graph is a $1$-sum of $K_4$ and outerplanar graphs.
Hamers \cite{Hame97} showed that Chinese postman games are concave if the underlying graph is weakly cyclic.
Okamoto \cite{Okam03} showed that vertex cover games are concave if and only if the underlying graph is $(K_3,P_4)$-free, and coloring games are concave if and only if the underlying graph is complete multipartite.
Based on the result of Hamers \cite{Hame97}, Albizuri and Hamers \cite{AlbiHame14} characterized the concavity of some variants of Chinese postman games.
Kobayashi and Okamoto \cite{KobaOkam14} initialized the study for the concavity of spanning tree games,
where a sufficient condition and a necessary condition were given separately.
Koh and Sanit\`{a} \cite{KohSani20} provided the first necessary and sufficient characterization for the concavity of spanning tree games.
Platz \cite{Plat19} gave a complete characterization for the concavity of multi-depot Steiner traveling salesman games.

In this paper, we focus on the convexity of independent set games.
Independent set games were introduced by Deng, Ibaraki and Nagamochi \cite{DengIbar99}, which model the following scenario with projects and participants.
Every participant is suitable for two projects but only allowed to join one project.
Every project requires \emph{all} suitable participants to cooperate to be done.
The problem of maximizing doable projects can be viewed as a maximum independent set problem where projects are vertices and every participant is an edge joining two suitable projects.
We provide a necessary and sufficient characterization for the convexity of independent set games via the underlying graph. 
Our characterization implies that convex independent set games can be recognized effciently.
Besides, we introduce a new class of independent set games and provide a necessary and sufficent characterization for the convexity.

The rest of this paper is structured as follows.
In Section \ref{sec:preliminaries}, notions and notations in graph theory and game theory are reviewed.
Section \ref{sec:CVX_ISGame} is devoted to an efficient characterization for the convexity of independent set games.
In Section \ref{sec:CVX_RISGame}, we introduce a new class of independent set games and characterize the convexity.
Section \ref{sect:conclusions} concludes the results in this paper and discusses the directions of future work.

\section{Preliminaries}
\label{sec:preliminaries}

\subsection{Graphs}
We assume that the readers have a moderate familiarity with graphs.
However, some notions and notations used in this paper should be clarified before proceeding.
Throughout, a graph is always finite, undirected and simple.
For $n\in \mathbb{N}$,
we use $K_n$ to denote the complete graph with $n$ vertices,
use $K_{1,n}$ to denote the graph which is a \emph{star}, i.e., a complete bipartite graph where one part has one vertex and the other part has $n$ vertices,
use $C_n$ to denote the graph which is a cycle with $n$ vertices,
and use $P_n$ to denote the graph which is a path with $n$ vertices.
Since $K_2$ is isomorphic to $K_{1,1}$ and $P_3$ is isomorphic to $K_{1,2}$, both $K_2$ and $P_3$ are stars.
Let $H$ be a graph.
We use $V(H)$ to denote the vertex set of $H$ and use $E(H)$ to denote the edge set of $H$.
A graph is said $H$-\emph{free} if it contains no subgraph isomorphic to $H$.
Let $G=(V,E)$ be a graph.
For any $v\in V$,
$N_G(v)$ denotes the set of vertices \emph{adjacent} to $v$,
$\delta_G(v)$ denotes the set of edges \emph{incident} to $v$,
and $d_G(v)$ denotes the degree of $v$.
A vertex is \emph{isolated} if it is a vertex with degree zero.
A vertex is \emph{pendant} if it is a vertex with degree one.
An edge is \emph{pendant} if it is incident to a pendant vertex.
For any $U\subseteq V$, $G[U]$ denotes the induced subgraph of $G$.
In particular, $G[\emptyset]$ is an empty graph which has no vertex.
For any $F\subseteq E$,
$V\langle F\rangle $ denotes the set of vertices incident \emph{only} to edges in $F$,
and $G[F]$ denotes the edge-induced subgraph of $G$.
An \emph{independent set} of $G$ is a vertex set $U\subseteq V$ such that $G[U]$ has no edge. 
An \emph{edge cover} of $G$ is an edge set $F\subseteq E$ such that every vertex of $G$ is incident to an edge in $F$.
For simplicity, we use $\alpha(G)$ to denote the maximum cardinality of independent sets in $G$.

\subsection{Cooperative games}
Let $\Gamma=(N,\gamma)$ be a cooperative game,
where $N$ is the set of players and $\gamma:2^N\rightarrow \mathbb{R}$ is the characterizatic function with $\gamma(\emptyset)=0$.
A subset $S$ of $N$ is called a \emph{coalition} and $N$ is called the \emph{grand coalition}.
For each coalition $S$,
$\gamma(S)$ represents the value distributed among the players in $S$.
We call $\Gamma$ \emph{convex} if for any $S,T\subseteq N$,
\begin{equation}
\label{eq:CVX1}
\gamma(S)+\gamma(T)\leq \gamma(S\cap T)+\gamma(S\cup T),
\end{equation}
or equivalently, for any $i\in N$ and any $S\subseteq T\subseteq N\backslash \{i\}$,
\begin{equation}
\label{eq:CVX2}
\gamma(S\cup \{i\})-\gamma(S)\leq \gamma(T\cup \{i\})-\gamma(T).
\end{equation}
We call $\Gamma$ \emph{concave} if the reverse inequality holds in $(\ref{eq:CVX1})$ or $(\ref{eq:CVX2})$,
and \emph{additive} if the equality holds in $(\ref{eq:CVX1})$ or $(\ref{eq:CVX2})$.

An \emph{allocation} of $\Gamma$ is a vector $\boldsymbol{x}\in \mathbb{R}^{N}$ which consists of proposed amounts to be shared by players in $N$.
An allocation $\boldsymbol{x}$ is called \emph{efficient} if $\sum_{i\in N}x_i=\gamma(N)$, and called \emph{coalitionally rational} if $\sum_{i\in S}x_i
\geq \gamma(S)$ for any $S\subseteq N$.
The \emph{core} of $\Gamma$ is the set of allocations that are efficient and coalitionally rational.
The core of a cooperative game may be empty.
A cooperative game is called \emph{balanced} if the core is non-empty.
Balanced games contain convex games as a special subclass \cite{Shap71}.

\section{Convexity of independent set games}
\label{sec:CVX_ISGame}

The \emph{independent set game} introduced by Deng et al. \cite{DengIbar99} is a cooperative game $\Gamma_G=(E,\gamma)$ defined on a graph $G=(V,E)$, where $E$ is the set of players and $\gamma:2^{E}\rightarrow \mathbb{N}$ is the characteristic function such that $\gamma(F)=\alpha(G[V\langle F\rangle])$ for any $F\subseteq E$.
We always assume that the underlying graph of an independent set game has no isolated vertex.
Deng et al. \cite{DengIbar99} showed that an independent set game is balanced if and only if the underlying graph is a K\"{o}nig-Egerv\'{a}ry graph \cite{Demi79, Schr03}, where the maximum cardinality of independent sets is equal to the minimum cardinality of edge covers.
In this paper, we show that an independent set game is convex if and only if every non-pendant edge is incident to a pendant edge in the underlying graph.
We remark that (non-)pendant vertices and edges always refer to vertices and edges in the ground graph throughout this section.

Let $G=(V,E)$ be a graph without isolated vertices.
Let $e\in E$ be an edge with endpoints $u_1, u_2$ and $F\subseteq E\backslash \{e\}$ be a set of edges.
Clearly, neither $u_1$ nor $u_2$ belongs to $V\langle F \rangle$.
Let $F'=F\cup \{e\}$.
It follows that $V\langle F \rangle \subseteq V\langle F' \rangle$.
Thus every independent set of $G[V\langle F\rangle]$ is also an independent set of $G[V\langle F'\rangle]$.
Notice that $V\langle F' \rangle\subseteq V\langle F\rangle \cup\{u_1,u_2\}$.
It follows that a maximum independent set of $G[V\langle F'\rangle]$ is also a maximum independent set of $G[V\langle F\rangle]$ if it is a subset of $V\langle F\rangle$.
Let $\Gamma_G$ be the independent set game on $G$.
It follows that $\gamma(F')\geq \gamma(F)$.
In the following, we distinguish two cases of $e$.

\begin{lemma}
\label{lem:1}
If $e$ is a pendant edge in $G$, then $\gamma(F')=\gamma(F)+1$.
\end{lemma}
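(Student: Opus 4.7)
The plan is to prove the equality by showing both inequalities separately, using the simple but crucial structural observation about how the vertex set $V\langle F'\rangle$ differs from $V\langle F\rangle$. Let $u_1$ denote the pendant endpoint of $e$ and $u_2$ the other endpoint. Since every vertex in $V\langle F'\rangle\setminus V\langle F\rangle$ must have an edge in $F'\setminus F=\{e\}$ incident to it, we have $V\langle F'\rangle\setminus V\langle F\rangle\subseteq\{u_1,u_2\}$. Moreover, because $u_1$ has degree one and $e$ is its only incident edge, $u_1\in V\langle F'\rangle$. On the other hand, $u_2\notin V\langle F\rangle$ because $e$ is incident to $u_2$ and $e\notin F$. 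These facts will drive both directions.

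For the lower bound $\gamma(F')\geq\gamma(F)+1$, I would take a maximum independent set $I$ of $G[V\langle F\rangle]$ and add $u_1$ to it. The only neighbor of $u_1$ in $G$ is $u_2$, and $u_2\notin V\langle F\rangle\supseteq I$, so $I\cup\{u_1\}$ is indeed an independent set in $G$; it lies in $V\langle F'\rangle$ because both $I\subseteq V\langle F\rangle\subseteq V\langle F'\rangle$ and $u_1\in V\langle F'\rangle$. Hence $\alpha(G[V\langle F'\rangle])\geq |I|+1=\gamma(F)+1$.

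For the upper bound $\gamma(F')\leq\gamma(F)+1$, I would take a maximum independent set $I'$ of $G[V\langle F'\rangle]$ and restrict it to $V\langle F\rangle$ by removing any occurrences of $u_1$ and $u_2$. The resulting set $I'\setminus\{u_1,u_2\}$ is contained in $V\langle F'\rangle\setminus\{u_1,u_2\}\subseteq V\langle F\rangle$ by the observation above, and it remains independent in $G$. The key point is that $u_1$ and $u_2$ cannot both lie in $I'$ because $e=u_1u_2$ is an edge of $G$; therefore $|I'\setminus\{u_1,u_2\}|\geq |I'|-1$, yielding $\gamma(F)\geq\gamma(F')-1$.

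This argument contains no genuine obstacle; the whole proof rests on correctly pinning down the difference $V\langle F'\rangle\setminus V\langle F\rangle$ and exploiting the fact that $e$ prevents its two endpoints from simultaneously belonging to an independent set. The pendant hypothesis on $e$ is used only to guarantee that $u_1$ can be safely added to any independent set of $G[V\langle F\rangle]$, since its unique neighbor $u_2$ is automatically excluded from $V\langle F\rangle$. Without this hypothesis (the case treated in the next lemma), adding $u_1$ might conflict with other vertices of $V\langle F\rangle$ through edges incident to $u_1$ other than $e$, which is precisely why the argument splits into two cases.
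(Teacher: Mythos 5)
Your proof is correct and follows essentially the same route as the paper's: the lower bound comes from adding the pendant endpoint $u_1$ to a maximum independent set of $G[V\langle F\rangle]$, and the upper bound from deleting the endpoints of $e$ from a maximum independent set of $G[V\langle F'\rangle]$. Your upper-bound step is marginally cleaner---removing both $u_1$ and $u_2$ at once and observing that at most one of them can lie in the independent set---which avoids the paper's normalization argument that $u_1$ may be assumed to belong to $I_{F'}$ and also covers the case where both endpoints of $e$ are pendant without treating it separately.
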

\begin{proof}
Let $I_F$ and $I_{F'}$ be a maximum independent set in $G[V\langle F\rangle]$ and $G[V\langle F'\rangle]$, respectively.
We show that $\lvert I_{F'}\rvert=\lvert I_F\rvert+1$.
We may assume that $e$ is only incident to one pendant vertex, say $u_1$, since otherwise $\lvert I_{F'}\rvert=\lvert I_F\rvert+1$ is trivial.
It follows that $u_1\in V\langle F'\rangle\backslash V\langle F\rangle$.

We claim that $u_1\in I_{F'}$.
To see this, we distinguish two cases of $V\langle F'\rangle$.
It is trivial if $V\langle F'\rangle=V\langle F\rangle \cup \{u_1\}$, as in this case $u_1$ is an isolated vertex in $G[V\langle F'\rangle]$.
Hence assume that $V\langle F'\rangle=V\langle F\rangle \cup \{u_1,u_2\}$.
In this case we have $I_{F'}\cap \{u_1,u_2\}\not=\emptyset$, since otherwise $I_{F'}\cup \{u_1\}$ is an independent set in $G[V\langle F'\rangle]$, which contradicts the maximality of $I_{F'}$.
Since $u_1$ is a pendant vertex, we may always assume that $u_1\in I_{F'}$ by replacing $u_2$ with $u_1$ in $I_{F'}$ when necessary.

Now on one hand, $I_{F'}\backslash \{u_1\}$ is an independent set in $G[V\langle F\rangle]$. 
Thus $\lvert I_{F'}\backslash \{u_1\}\rvert = \lvert I_{F'}\rvert -1\leq \lvert I_F\rvert$.
On the other hand, since $u_1$ is a pendant vertex, $I_F\cup \{u_1\}$ is an independent set in $G[V\langle F'\rangle]$.
Thus $\lvert I_{F'}\rvert\geq \lvert I_F\cup \{u_1\}\rvert=\lvert I_F\rvert +1$.
It follows that $\lvert I_{F'}\rvert=\lvert I_F\rvert+1$.
\end{proof}

\begin{lemma}
\label{lem:2}
If $e$ is not a pendant edge in $G$, then $\gamma(F')= \gamma(F)$ or $\gamma(F')= \gamma(F)+1$.
Moreover, if $\gamma(F')=\gamma (F)+1$, then every maximum independent set in $G[V\langle F'\rangle]$ contains an endpoint of $e$ which is not adjacent to any pendant vertex.
\end{lemma}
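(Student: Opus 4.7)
The plan is to first establish $\gamma(F')\leq \gamma(F)+1$ by a direct comparison of maximum independent sets, and then to analyze how every maximum independent set of $G[V\langle F'\rangle]$ must look when the equality $\gamma(F')=\gamma(F)+1$ is attained.

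For the upper bound, I would take any maximum independent set $I_{F'}$ of $G[V\langle F'\rangle]$ and exploit the inclusion $V\langle F'\rangle\subseteq V\langle F\rangle\cup\{u_1,u_2\}$ noted before the lemma. Since $e=u_1u_2$ is an edge, $\lvert I_{F'}\cap\{u_1,u_2\}\rvert\leq 1$, so $I_{F'}\setminus\{u_1,u_2\}$ is independent in $G[V\langle F\rangle]$ and has cardinality at least $\lvert I_{F'}\rvert-1$. Combined with the monotonicity $\gamma(F')\geq\gamma(F)$ already observed just before the lemma, this yields the dichotomy $\gamma(F')\in\{\gamma(F),\gamma(F)+1\}$.

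For the ``moreover'' clause, assume $\gamma(F')=\gamma(F)+1$ and let $I_{F'}$ be an arbitrary maximum independent set of $G[V\langle F'\rangle]$. I would first observe that $I_{F'}$ must contain some endpoint of $e$: otherwise $I_{F'}\subseteq V\langle F\rangle$ and we would have $\gamma(F)\geq \lvert I_{F'}\rvert=\gamma(F)+1$, a contradiction. Say $u_1\in I_{F'}$. To derive the claim, I would suppose for contradiction that $u_1$ is adjacent to some pendant vertex $w$. Because $e$ is not a pendant edge, $u_2$ is not pendant, so $w\neq u_2$ and hence $u_1w\neq e$; combined with $u_1\in V\langle F'\rangle$ this forces $u_1w\in F'\setminus\{e\}=F$. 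Since $w$ is pendant, $u_1w$ is its unique incident edge, so $w\in V\langle F\rangle$. Then the swap $(I_{F'}\setminus\{u_1\})\cup\{w\}$ is independent (its only new vertex $w$ has $u_1$ as its unique neighbor, which has just been removed), lies entirely in $V\langle F\rangle$, and has cardinality $\lvert I_{F'}\rvert=\gamma(F)+1$, contradicting the definition of $\gamma(F)$.

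The only delicate point is keeping track of the two related vertex sets $V\langle F\rangle$ and $V\langle F'\rangle$ during the swap, and ensuring that the pendant neighbor $w$ indeed lands inside $V\langle F\rangle$ after $u_1$ is replaced. The hypothesis that $e$ is not a pendant edge enters precisely to exclude the degenerate case $w=u_2$, which would make $u_1w=e$ and block the transfer of $w$ into $V\langle F\rangle$; everything else is a routine consequence of pendancy.
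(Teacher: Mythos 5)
Your proposal is correct and follows essentially the same route as the paper: the upper bound comes from deleting the at most one endpoint of $e$ lying in a maximum independent set of $G[V\langle F'\rangle]$, and the ``moreover'' clause comes from the same swap $(I_{F'}\setminus\{u_1\})\cup\{w\}$ with a pendant neighbor $w$, which produces an independent set of size $\gamma(F)+1$ inside $G[V\langle F\rangle]$, a contradiction. The only (welcome) difference is that you spell out explicitly why an arbitrary maximum independent set of $G[V\langle F'\rangle]$ must contain an endpoint of $e$, a step the paper leaves implicit.
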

\begin{proof}
Assume that $e$ is not a pendant edge in $G$.
It is trivial that $\gamma(F)\leq \gamma(F')$.
We show that $\gamma(F')\leq \gamma(F)+1$.
Without loss of generality, assume that $V\langle F'\rangle \not= V\langle F\rangle$,
since otherwise we have $\gamma(F')=\gamma(F)$.
Let $I_F$ and $I_{F'}$ be a maximum independent set in $G[V\langle F\rangle]$ and $G[V\langle F'\rangle]$, respectively.
It suffices to show that $\lvert I_{F'}\rvert\leq \lvert I_F\rvert+1$.
Notice that $\lvert I_{F'}\rvert= \lvert I_F\rvert$ if $I_{F'}\cap \{u_1,u_2\}=\emptyset$, as $I_{F'}$ is also a maximum independent set in $G[V\langle F\rangle]$.
Hence, without loss of generality, assume that $u_1\in I_{F'}$.
Notice that $I_{F'}\backslash \{u_1\}$ is an independent set in $G[V\langle F\rangle]$.
It follows that $\lvert I_{F}\rvert\geq \lvert I_{F'}\backslash \{u_1\}\rvert=\lvert I_{F'}\rvert -1$.
Therefore, we have $\gamma (F)\leq \gamma(F')\leq \gamma(F)+1$.

Now further assume $\gamma(F')=\gamma (F)+1$.
Recall that $u_1\in I_{F'}$ is the endpoint of $e$.
We show that $u_1$ is not adjacent to any pendant vertex by proving that $d_G (v)\geq 2$ for any $v\in N_G (u_1)$.
Assume to the contrary that there is a vertex $v^*\in N_G (u_1)$ with $d_G (v^*)=1$.
Then $\{u_1,v^*\}$ is a pendant edge, implying that $v^*\not=u_2$.
It follows that $\delta_G (u_1)\subseteq F'$ and $\delta_G (u_1)\backslash \{e\}\subseteq F$.
Thus $\{u_1,v^*\}$ is an edge in $F$.
Since $v^*$ is a pendant vertex, we have $v^*\in V\langle F\rangle$ and $u_1\not \in V\langle F\rangle$, implying that $v^*$ is an isolated vertex in $G[V\langle F\rangle]$.
Let $I'_{F'}=I_{F'}\backslash \{u_1\}$.
Notice that $I'_{F'}$ is an independent set in $G[V\langle F\rangle]$ with cardinality $\lvert I_{F'}\rvert-1$.
Since $v^*$ and $u_1$ are adjacent in $G[V\langle F'\rangle ]$, we have $v^*\not\in I_{F'}$.
Then $I'_{F'}\cup \{v^*\}$ is an independent set in $G[V\langle F\rangle]$ with cardinality $\lvert I_{F'}\rvert$, which contradicts $\gamma(F')=\gamma(F)+1$.
\end{proof}

Now we are ready to present our main result.

\begin{theorem}
\label{thm:CVX_ISGame}
Let $G=(V,E)$ be a graph without isolated vertices and $\Gamma_G=(E,\gamma)$ be the independent set game on $G$.
Then $\Gamma_G$ is convex if and only if every non-pendant edge is incident to a pendant edge in $G$.
\end{theorem}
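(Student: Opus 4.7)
The plan is to prove both directions by extracting a clean closed form for $\gamma$ under the hypothesis, using it directly for sufficiency, and for necessity exhibiting a specific violation of convexity after a short combinatorial selection argument.

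For the sufficiency, I will first show that the set
\[ A = \{v \in V : v \text{ is non-pendant in } G \text{ and has no pendant neighbor in } G\} \]
is an independent set of $G$ under the hypothesis: given $v \in A$ and any neighbor $w$ of $v$, the edge $\{v,w\}$ must be non-pendant (else $w$ would be pendant, making $v$ have a pendant neighbor), so by the hypothesis $\{v,w\}$ is incident to some pendant edge, which cannot sit at $v$ (again $v$ has no pendant neighbor), hence sits at $w$ and gives $w$ a pendant neighbor, so $w \notin A$. Next, a standard swap argument shows that every maximum independent set of $G[V\langle F\rangle]$ may be taken to contain all pendant vertices of $V\langle F\rangle$, and combined with the independence of $A$ this yields the decomposition
\[ \gamma(F) = \bigl|\{p \in V : p \text{ pendant in } G, \delta_G(p) \subseteq F\}\bigr| + \bigl|\{v \in A : \delta_G(v) \subseteq F\}\bigr|. \]
Each summand is a sum over vertices of indicator functions $F \mapsto \mathbf{1}[D \subseteq F]$ for fixed $D \subseteq E$, and each such indicator is easily checked to be supermodular; since sums of supermodular set functions are supermodular, $\gamma$ is supermodular and $\Gamma_G$ is convex.

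For the necessity, suppose the hypothesis fails. Then some non-pendant edge $\{u_1,u_2\}$ has neither endpoint adjacent to a pendant vertex, so $u_1, u_2 \in A$ are adjacent and $G[A]$ contains an edge. I will find an edge $\{a,b\}$ of $G[A]$ admitting at most one vertex $v \notin \{a,b\}$ with $N_G(v) = \{a,b\}$ (call such a $v$ an $\{a,b\}$-\emph{twin}). If every edge of $G[A]$ had at least two twins, then picking any edge $\{u_1,u_2\}$ of $G[A]$ and a twin $v$ of it (note $v \in A$ since $d_G(v)=2$ and its two neighbors lie in $A$), the edge $\{u_1,v\}$ is again in $G[A]$, and any twin $w$ of $\{u_1,v\}$ would satisfy $w \in N_G(v) = \{u_1,u_2\}$ and hence $w = u_2$, giving at most one twin for $\{u_1, v\}$, a contradiction.

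Given such an edge $\{a,b\}$ with at most one twin, I take $S = \delta_G(a)$ and $T = \delta_G(b)$. Since $a \in A$ has no pendant neighbor, one checks that $V\langle S\rangle = \{a\}$ and so $\gamma(S) = 1$, and similarly $\gamma(T) = 1$. Since $S \cap T = \{\{a,b\}\}$ and neither $a$ nor $b$ is pendant, $V\langle S \cap T\rangle = \emptyset$ and $\gamma(S \cap T) = 0$. Finally $V\langle S \cup T\rangle = \{a,b\} \cup X$ where $X$ is the set of $\{a,b\}$-twins; the induced subgraph has $a, b$ adjacent and each adjacent to every vertex of $X$, with $X$ itself independent, so its maximum independent set has size $\max(|X|,1) = 1$ and $\gamma(S \cup T) = 1$. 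The convexity inequality $\gamma(S) + \gamma(T) \le \gamma(S\cap T) + \gamma(S \cup T)$ then reads $2 \le 1$, which fails, so $\Gamma_G$ is not convex. The main technical delicacy is the twin-switching argument above; a naive construction using the offending edge $\{u_1,u_2\}$ directly can fail to produce a violation when $u_1, u_2$ share many degree-$2$ common neighbors, and this is precisely the loophole that the twin-switching argument closes.
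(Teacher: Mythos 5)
Your necessity argument is essentially the paper's own: your ``twins'' of an edge $\{a,b\}$ are exactly the paper's set $W_{ab}$ of degree-two common neighbours, your twin-switching step is its Case~2 reduction, and the final computation with $S=\delta_G(a)$, $T=\delta_G(b)$ is its Case~1; that half is correct. Your sufficiency argument, however, takes a genuinely different route. The paper proves two lemmas on the marginal value $\gamma(F\cup\{e\})-\gamma(F)$ for pendant versus non-pendant edges and refutes a hypothetical violation of \eqref{eq:CVX2} by an explicit independent-set exchange; you instead derive a closed form for $\gamma$ as a sum of threshold indicators $F\mapsto \mathbf{1}[D\subseteq F]$ and invoke the supermodularity of each summand. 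When it works, your decomposition is cleaner and more informative (it exhibits $\Gamma_G$ as a nonnegative combination of elementary convex games), and your observation that the set $A$ of non-pendant vertices without pendant neighbours is independent is a nice structural consequence of the hypothesis.

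There is, however, one genuine (though local) error in that half: the formula $\gamma(F)=\lvert\{p \text{ pendant}: \delta_G(p)\subseteq F\}\rvert+\lvert\{v\in A:\delta_G(v)\subseteq F\}\rvert$ is false whenever $G$ has a component isomorphic to $K_2$. Such components satisfy the theorem's hypothesis (every edge is pendant), so they cannot be excluded; yet for a $K_2$ component with edge $f\in F$ both endpoints are adjacent pendant vertices, contributing $1$ to $\gamma(F)$ but $2$ to your first summand. Correspondingly, your ``standard swap argument'' that a maximum independent set of $G[V\langle F\rangle]$ may be taken to contain \emph{all} pendant vertices of $V\langle F\rangle$ fails there, since no independent set contains two adjacent vertices. (The paper's Lemma~\ref{lem:1} explicitly treats the case of an edge with two pendant endpoints for exactly this reason.) The fix is routine: count each $K_2$ component by the single supermodular indicator $\mathbf{1}[f\in F]$, and run your swap/injection argument only over pendant vertices whose neighbour is non-pendant; with that patch the decomposition, and hence the supermodularity conclusion, is correct.
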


\begin{proof}
We first prove the ``only if'' part.
Assume to the contrary that there is a non-pendant edge $e\in E$ which is not incident to any pendant edge in $G$.
Let $u_1,u_2$ be the endpoints of $e$.
Thus $d_G (v)\geq 2$ for any $v\in N_G (u_1)\cup N_G (u_2)$.
Let $W_{u_1 u_2}$ be the set of vertices from $N_G(u_1)\cap N_G(u_2)$ with degree two, i.e., $W_{u_1 u_2}=\{w\in N_G(u_1)\cap N_G(u_2): d_G(w)=2\}$.
In the following, we distinguish two cases of $W_{u_1 u_2}$ and show that either case leads to a contradiction.
Hence we conclude that every non-pendant edge is incident to a pendant edge in $G$.

Case $1$: $\lvert W_{u_1 u_2}\rvert\leq 1$.
Let $S=\delta_G (u_1)$ and $T=\delta_G (u_2)$.
Since $d_G (v)\geq 2$ for any $v\in N_G (u_1)\cup N_G (u_2)$,
we have $\gamma(S)=\gamma(T)=\alpha (K_1)=1$ and $\gamma(S\cap T)=\gamma(\{e\})=\alpha(K_0)=0$.
Moreover, $\gamma(S\cup T)=\alpha(K_2)=1$ if $\lvert W_{u_1 u_2}\rvert=0$, and $\gamma(S\cup T)=\alpha(K_3)=1$ if $\lvert W_{u_1 u_2} \rvert=1$.
It follows that $\gamma(S)+\gamma(T)>\gamma(S\cap T)+\gamma(S\cup T)$, which contradicts the convexity of $\Gamma_G$.

Case $2$: $\lvert W_{u_1 u_2}\rvert\geq 2$.
Let $w^*\in W_{u_1 u_2}$.
Clearly, $N_G(w^*)=\{u_1,u_2\}$.
The following arguments hold for $i=1,2$.
Since $d_G (v)\geq 2$ for any $v\in N_G (u_i)\cup N_G(w^*)$, $\{u_i,w^*\}$ is neither a pendant edge nor incident to any pendant edge.
Let $W_{u_i w^*}$ be the vertex set defined analogously to $W_{u_1 u_2}$,
i.e., $W_{u_i w^*}=\{w\in N_G(u_i)\cap N_G(w^*): d_G(w)=2\}$.
Instead of $W_{u_1 u_2}$, we turn to consider the cardinality of $W_{u_i w^*}$.
Clearly, $W_{u_i w^*}\subseteq N_G(w^*)$.
It follows that $W_{u_i w^*}\subseteq \{u_1,u_2\}$.
However, $\lvert W_{u_1 u_2}\rvert\geq 2$ implies that $d_G(u_i)\geq 3$.
It follows that $W_{u_i w^*}=\emptyset$ which boils down to Case $1$.

Now we prove the ``if'' part.
Assume to the contrary that $\Gamma_G$ is not convex.
Then there is an edge $e\in E$ and two edge sets $S\subseteq T\subseteq E\backslash \{e\}$ such that 
\begin{equation}
\label{eq:NonCVX}
\gamma(S')-\gamma(S)>\gamma(T')-\gamma(T),
\end{equation}
where $S'=S\cup \{e\}$ and $T'=T\cup \{e\}$.
Let $u_1,u_2$ be the endpoints of $e$.
By Lemma \ref{lem:1}, $e$ is not a pendant edge in $G$, since otherwise $\gamma(S')-\gamma(S)=\gamma(T')-\gamma(T)=1$.
Moreover, Lemma \ref{lem:2} implies that
\begin{equation}
\label{eq:Equality1}
\gamma(S')=\gamma(S)+1
\end{equation}
and
\begin{equation}
\label{eq:Equality2}
\gamma(T')=\gamma(T).
\end{equation}

Let $I_{S'}$ and $I_{T}$ be a maximum independent set in $G[V\langle S'\rangle]$ and $G[V\langle T\rangle]$, respectively.
By Lemma \ref{lem:2}, \eqref{eq:Equality1} implies that an endpoint of $e$, say $u_1$, belongs to $I_{S'}$ but is not adjacent to any pendant vertex.
It follows that $\delta_G (u_1)\subseteq S'\subseteq T'$.
Since $u_1\not\in V\langle T\rangle$, we have $u_1\not\in I_T$.
In the following, we construct an independent set in $G[V\langle T'\rangle]$ with $u_1$ from $I_T$, the size of which is larger than $I_T$.
However, this contradicts \eqref{eq:Equality2}.
Hence we conclude that $\Gamma_G$ is convex.

It remains to construct an asserted independent set in $G[V\langle T'\rangle]$.
We first show that $I_T\cap N_G(u_1)\not=\emptyset$ and every vertex in $I_T\cap N_G(u_1)$ is adjacent to a pendant vertex from $V\langle T\rangle$.
Notice that $I_T$ is also an independent set in $G[V\langle T'\rangle]$.
If $I_T\cap N_G(u_1)=\emptyset$, i.e., $u_1$ is not adjacent to any vertex in $I_T$, then $I_T\cup \{u_1\}$ is an independent set in $G[V\langle T'\rangle]$ with cardinality $\lvert I_T\rvert+1$, which is trivial.
Hence $I_T\cap N_G(u_1)\not=\emptyset$.
Since $u_1$ is not adjacent to any pendant vertex, every vertex in $I_T\cap N_G(u_1)$ is adjacent to a pendant vertex.
Notice that $I_T\cap N_G(u_1)\subseteq V\langle T\rangle$.
Hence every pendant vertex which is adjacent to a vertex in $I_T\cap N_G(u_1)$ also belongs to $V\langle T\rangle$.
Now we construct a larger independent set in $G[V\langle T'\rangle]$ with $u_1$ from $I_T$.
Let $I'_T$ be the vertex set obtained from $I_T$ by replacing each vertex in $I_T\cap N_G(u_1)$ with a pendant vertex from $V\langle T\rangle$ adjacent to it.
Clearly, $I'_T\subseteq V\langle T\rangle$.
Moreover, $I'_T$ is a maximum independent set in $G[V\langle T\rangle]$.
Hence $I'_T$ is also an independent set in $G[V\langle T'\rangle]$.
Since $u_1$ is not adjacent to any vertex in $I'_T$, $I'_T\cup \{u_1\}$ is an independent set in $G[V\langle T'\rangle]$ with cardinality $\lvert I_T\rvert+1$.
\end{proof}

Theorem \ref{thm:CVX_ISGame} can be strengthened to a characterization for the additivity of independent set games.
\begin{corollary}
\label{thm:ADD_ISGame}
Let $G=(V,E)$ be a graph without isolated vertices and $\Gamma_G=(E,\gamma)$ be the independent set game on $G$.
Then $\Gamma_G$ is additive if and only if every non-pendant vertex is adjacent to a pendant vertex in $G$.
\end{corollary}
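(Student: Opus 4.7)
The plan is to deduce the corollary from Theorem~\ref{thm:CVX_ISGame} together with Lemmas~\ref{lem:1} and \ref{lem:2}, exploiting the fact that additivity is exactly the ``tight'' case of convexity: equality must hold in \eqref{eq:CVX2} for every $e\in E$ and every pair $S\subseteq T\subseteq E\setminus\{e\}$. Thus I will argue each direction by controlling the marginal $\gamma(F\cup\{e\})-\gamma(F)$ as a function of $F$, which by Lemmas~\ref{lem:1} and \ref{lem:2} is always either $0$ or $1$; additivity becomes the assertion that this marginal is constant in $F$ for each fixed $e$.

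For the \emph{only if} direction, additivity implies convexity, so Theorem~\ref{thm:CVX_ISGame} already gives that every non-pendant edge is incident to a pendant edge; the task is to upgrade this edge statement to a vertex statement. Arguing by contradiction, I would assume some non-pendant vertex $v$ has no pendant neighbor and pick any edge $e=\{v,u\}\in\delta_G(v)$; the hypothesis forces $u$ to be non-pendant as well. The intended witness of non-additivity is $S=\emptyset$ and $T=\delta_G(v)\setminus\{e\}$. A short computation of $V\langle\cdot\rangle$ should yield $V\langle\{e\}\rangle=\emptyset$ (since neither endpoint of $e$ is pendant) and $V\langle T\rangle=\emptyset$ (no vertex has all its incident edges in $\delta_G(v)$, as $v$ is excluded by $e\notin T$ and every neighbor of $v$ is non-pendant and hence has an edge outside $\delta_G(v)$), while $V\langle T\cup\{e\}\rangle=\{v\}$. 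This gives marginals $0$ and $1$ respectively, violating additivity.

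For the \emph{if} direction, every non-pendant vertex being adjacent to a pendant vertex immediately implies the convexity condition of Theorem~\ref{thm:CVX_ISGame}, so $\Gamma_G$ is convex and it suffices to check that for every fixed $e$ the marginal is constant in $F$. When $e$ is pendant, Lemma~\ref{lem:1} directly gives marginal $1$ for every $F$. When $e$ is non-pendant, both endpoints of $e$ are non-pendant, hence each is adjacent to a pendant vertex by hypothesis; the contrapositive of the second clause of Lemma~\ref{lem:2} then forbids marginal $1$, leaving marginal $0$ for every $F$. The main obstacle I anticipate is the bookkeeping in the only-if direction, in particular verifying $\gamma(T)=0$: one has to confirm that no neighbor of $v$ other than $u$ sneaks into $V\langle T\rangle$, which ultimately rests on each such neighbor being non-pendant and hence owning an edge outside $\delta_G(v)$. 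Once these $V\langle\cdot\rangle$ computations are settled, both implications reduce to clean applications of the two lemmas.
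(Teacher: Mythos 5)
Your proposal is correct and follows essentially the same route as the paper: the ``if'' direction rests on Lemma~\ref{lem:1} and the contrapositive of the second clause of Lemma~\ref{lem:2} (both endpoints of a non-pendant edge are non-pendant, hence have pendant neighbors, so the marginal cannot be $1$), and your ``only if'' witness $S=\emptyset$, $T=\delta_G(v)\setminus\{e\}$ is just a reparametrization of the paper's partition of $\delta_G(v^*)$ into two non-empty parts, yielding the same values $0,0,0,1$. The $V\langle\cdot\rangle$ computations you flag as the main bookkeeping concern all check out, using that every neighbor of $v$ is non-pendant and hence owns an edge outside $\delta_G(v)$.
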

\begin{proof}
Notice that if every non-pendant vertex is adjacent to a pendant vertex, then every non-pendant edge is incident to a pendant edge. 
Hence to prove the additivity of $\Gamma_G$, it suffices to show that $\Gamma_G$ is concave if and only if every non-pendant vertex is adjacent to a pendant vertex in $G$.

We first prove the ``only if'' part.
Assume to the contrary that there is a non-pendant vertex $v^*\in V$ which is not adjacent to any pendant vertex in $G$.
Let $S,T\subseteq \delta_G(v^*)$ be two non-empty edge sets such that $S\cap T=\emptyset$ and $S\cup T=\delta_G (v^*)$.
It follows that $\gamma(S)=\gamma(T)=\alpha(K_0)=0$, $\gamma(S\cap T)=\gamma(\emptyset)=0$
and $\gamma(S\cup T)=\alpha(K_1)=1$.
Therefore, we have $\gamma(S)+\gamma(T)<\gamma(S\cap T)+\gamma(S\cup T)$,
which contradicts the concavity of $\Gamma_G$.

Now we prove the ``if'' part.
Assume to the contrary that $\Gamma_G$ is not concave.
Then there is an edge $e$ and two edge sets $S\subseteq T\subseteq E\backslash \{e\}$ such that 
\begin{equation}
\label{eq:NonCCV}
\gamma(S')-\gamma(S)<\gamma(T')-\gamma(T),
\end{equation}
where $S'=S\cup \{e\}$ and $T'=T\cup \{e\}$.
By Lemma \ref{lem:1}, $e$ is not a pendant edge in $G$, since otherwise $\gamma(S')-\gamma(S)=\gamma(T')-\gamma(T)=1$.
By Lemma \ref{lem:2}, we have
\begin{equation}
\label{eq:Equality3}
\gamma(T')=\gamma(T)+1
\end{equation}
and
\begin{equation}
\label{eq:Equality4}
\gamma(S')=\gamma(S).
\end{equation}
Moreover, \eqref{eq:Equality3} implies that an endpoint of $e$ is neither a pendant vertex nor adjacent to any pendant vertex, which contradicts our basic assumption.
\end{proof}

One natural question is that whether there are graphs inducing convex but not additive independent set games.
In Figure \ref{fig:example}, we enumerate all connected graphs with $5$ vertices that induce convex independent set games.
Clearly, $G_1$ and $G_2$ also induce additive independent set games, but $G_3$ and $G_4$ do not induce additive independent set games.

\vspace{-2em}
\begin{figure}[h!]
\hspace{-20.4em}\includegraphics[width=1.95\textwidth]{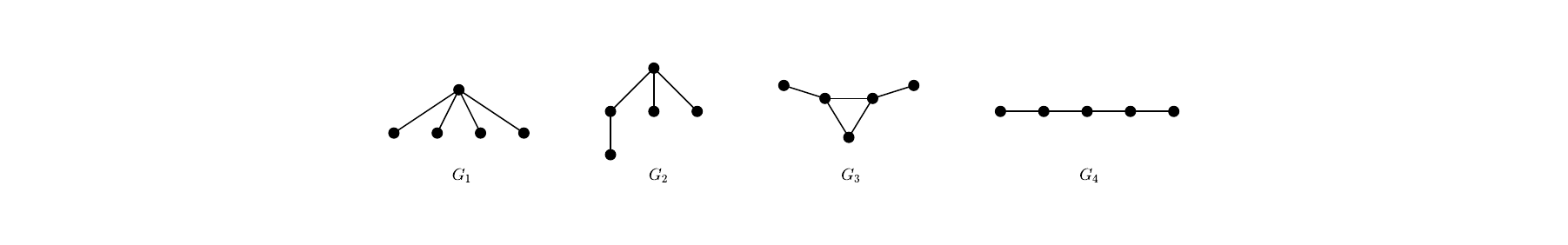}
\vspace{-4em}
\caption{All connected graphs with $5$ vertices that induce convex independent set games.}
\label{fig:example}
\end{figure}

Since pendant edges and pendant vertices can be recognized in polynomial time by checking the degree of every vertex,
Theorem \ref{thm:CVX_ISGame} and Corollary \ref{thm:ADD_ISGame} suggest that both convex instances and additive instances of independent set games can be recognized efficiently. 
\begin{corollary}
\label{thm:Complexity_ISGame}
  Let $G=(V,E)$ be a graph without isolated vertices and $\Gamma_G=(E,\gamma)$ be the independent set game on $G$.
  Then both convexity and additivity of $\Gamma_G$ can be determined in polynomial time.
\end{corollary}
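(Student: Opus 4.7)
The plan is to reduce the two decision problems to local degree checks on the vertices of $G$, and then observe that every step of this reduction is trivially polynomial (in fact linear) in the size of $G$.

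First I would invoke Theorem \ref{thm:CVX_ISGame} and Corollary \ref{thm:ADD_ISGame} to replace the two game-theoretic questions with purely combinatorial ones on the underlying graph: convexity is equivalent to the condition that every non-pendant edge is incident to a pendant edge, and additivity is equivalent to the condition that every non-pendant vertex is adjacent to a pendant vertex. So it suffices to exhibit polynomial-time procedures that check these two structural properties.

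Next I would describe the procedure. In a first pass, compute the degree $d_G(v)$ for every $v\in V$, which takes $O(|V|+|E|)$ time; this immediately identifies the pendant vertices (those with degree $1$) and hence the pendant edges (those with a pendant endpoint). In a second pass, for every vertex $v\in V$, set a boolean flag indicating whether $v$ has at least one pendant neighbour; this again runs in $O(|V|+|E|)$ time because one only needs to scan the adjacency lists once. With these flags in hand, additivity is decided by iterating over vertices: the game is additive iff every $v$ with $d_G(v)\geq 2$ has its flag set. Convexity is decided by iterating over edges: for each edge $e=\{u_1,u_2\}$ that is not already pendant (i.e.\ with $d_G(u_1),d_G(u_2)\geq 2$), check whether $u_1$ or $u_2$ carries the pendant-neighbour flag, since $e$ is incident to a pendant edge exactly when one of its endpoints is adjacent to a pendant vertex.

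There is no real obstacle here; the content of the corollary is entirely inherited from the structural characterizations, and the remaining work is a straightforward linear-time bookkeeping argument. The only thing to be careful about in the write-up is the small translation between ``edge incident to a pendant edge'' and ``endpoint adjacent to a pendant vertex,'' which I would state explicitly so the reduction to the precomputed flags is transparent.
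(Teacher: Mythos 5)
Your proposal is correct and follows the same route as the paper: the paper justifies the corollary in a single sentence by invoking Theorem \ref{thm:CVX_ISGame} and Corollary \ref{thm:ADD_ISGame} and noting that pendant vertices and edges are recognized by degree checks, which is exactly your reduction, only spelled out in more algorithmic detail. Your explicit remark that, for a non-pendant edge, being incident to a pendant edge is equivalent to having an endpoint adjacent to a pendant vertex is a correct and worthwhile clarification, but it does not change the argument.
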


\section{Convexity of relaxed independent set games}
\label{sec:CVX_RISGame}
Independent set games introduced by Deng et al. \cite{DengIbar99} have limitation in the sense that every project requires \emph{all} suitable participants to cooperate to be done.
Hence a project fails if it misses any suitable participant in a coalition.
However, a project might only require parts of suitable participants to cooperate to be done, not necessarily all suitable participants.
Here we consider a special case of this scenario, where every project requires all suitable participants in the current coalition, not necessarily all suitable participants in the grand coalition, to cooperate to be done.
Independent set games complying with the new requirement are called \emph{relaxed independent set games}.
Formally, the relaxed independent set game is a cooperative game $\hat{\Gamma}_G=(E,\hat{\gamma})$ defined on a graph $G=(V,E)$, where $E$ is the set of players and $\hat{\gamma}:2^{E}\rightarrow \mathbb{N}$ is the characteristic function such that $\hat{\gamma}(F)=\alpha(G[F])$ for any $F\subseteq E$.
We show that the convexity of relaxed independent set games admits an efficient characterization.

\begin{theorem}
\label{thm:CVX_RISGame}
Let $G=(V,E)$ be a graph without isolated vertices and $\hat{\Gamma}_G=(E,\hat{\gamma})$ be the relaxed independent set game on $G$.
Then $\hat{\Gamma}_G$ is convex if and only if $G$ is $(K_3,P_4)$-free.
\end{theorem}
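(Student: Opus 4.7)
The plan is to prove both directions by contrapositive: explicit two-set violations of \eqref{eq:CVX2} when a forbidden subgraph is present, and a structural classification of the graph when neither is present.

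For the ``only if'' direction, I would exhibit specific $S\subseteq T$ and edge $i$ that violate \eqref{eq:CVX2} whenever $G$ contains a $K_3$ or $P_4$ subgraph. If $e_1,e_2,e_3$ form a triangle on $v_1,v_2,v_3$, then setting $S=\emptyset$, $T=\{e_1,e_2\}$, $i=e_3$ gives marginals $\hat{\gamma}(\{e_3\})-\hat{\gamma}(\emptyset)=\alpha(K_2)-0=1$ and $\hat{\gamma}(T\cup\{e_3\})-\hat{\gamma}(T)=\alpha(K_3)-\alpha(P_3)=1-2=-1$, so convexity fails. If $e_1,e_2,e_3$ are consecutive edges of a $P_4$ on $v_1,\ldots,v_4$, then setting $S=\{e_1\}$, $T=\{e_1,e_3\}$, $i=e_2$ yields marginals $\alpha(P_3)-\alpha(K_2)=1$ and $\alpha(P_4)-\alpha(2K_2)=0$, again violating \eqref{eq:CVX2}. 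Both arguments are insensitive to whether the forbidden subgraph is induced, since $\hat{\gamma}(F)$ only sees the edges in $F$.

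For the ``if'' direction, my approach hinges on a structural lemma: a $(K_3,P_4)$-free graph without isolated vertices is a disjoint union of stars. The proof splits into two steps. First, every edge $uv$ has a pendant endpoint; otherwise $u$ has a neighbor $w\neq v$ and $v$ has a neighbor $x\neq u$, and either $w=x$ produces a triangle on $\{u,v,w\}$ or $w\neq x$ produces a $P_4$ subgraph $w$--$u$--$v$--$x$. Second, each connected component $C$ contains at most one non-pendant vertex: a shortest path in $C$ between two non-pendant vertices is either a single edge (contradicting the first step) or has some internal vertex which has degree at least two yet must be pendant by the first step. Hence each component is either $K_2$ or $K_{1,k}$ for some $k\geq 1$. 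Since $\alpha(K_{1,k})=k$, for every $F\subseteq E$ the graph $G[F]$ is a disjoint union of sub-stars whose total edge count is $|F|$, so $\hat{\gamma}(F)=|F|$; the resulting set function is linear and in particular convex.

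The main obstacle is the second half of the structural lemma, namely that two non-pendant vertices cannot coexist in a single connected component: the first step constrains only individual edges, so one must separately rule out paths of length $\geq 2$ between non-pendant vertices by reasoning about the degrees of internal vertices along a shortest such path. Once this classification is in place, the convexity proof collapses to the observation that the characteristic function merely counts edges.
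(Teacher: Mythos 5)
Your proof is correct and follows essentially the same route as the paper: exhibit an explicit violation of convexity from any $K_3$ or $P_4$ subgraph (the paper uses the set form \eqref{eq:CVX1} with $S$ equal to two incident edges of the forbidden subgraph and $T$ its complementary edge set, while you use the marginal form \eqref{eq:CVX2}, but the computations are interchangeable and both correctly use only non-induced subgraph data), then observe that a $(K_3,P_4)$-free graph without isolated vertices is a disjoint union of stars, on which $\hat{\gamma}(F)=\lvert F\rvert$ is additive and hence convex. The only difference is that the paper cites Okamoto for the star decomposition, whereas you prove it directly via your two-step degree argument (every edge has a pendant endpoint; a component cannot contain two non-pendant vertices), which is sound and makes the argument self-contained.
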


\begin{proof}
We first prove the ``only if'' part.
Assume to the contrary that $G$ is not $(K_3,P_4)$-free.
We show that any subgraph isomorphic to $K_3$ or $P_4$ gives rise to non-convex instances of $\hat{\Gamma}_G$.

Assume that $H$ is either $K_3$ or $P_4$.
Let $S\subseteq E(H)$ be a set consisting of two incident edges in $H$,
and let $T= E(H)\backslash S$.
Notice that $\hat{\gamma}(S)=\alpha(P_3)=2$, $\hat{\gamma}(T)=\alpha(K_2)=1$, and $\hat{\gamma}(S\cap T)=\hat{\gamma}(\emptyset)=0$.
Moreover, $\hat{\gamma}(S\cup T)=\alpha(K_3)=1$ if $H=K_3$ and $\hat{\gamma}(S\cup T)=\alpha(P_4)=2$ if $H=P_4$.
In either case, we have $\hat{\gamma}(S)+\hat{\gamma}(T)>\hat{\gamma}(S\cap T)+\hat{\gamma}(S\cup T)$,
which contradicts the convexity of $\hat{\Gamma}_G$.

Now we prove the ``if'' part.
Let $H_1,\ldots,H_r$ be the components of $G$.
Okamoto \cite{Okam03} showed that a graph is $(K_3,P_4)$-free if and only if every component is a star.
It follows that 
 \begin{equation}
 \alpha (G)=\sum_{i=1}^r \alpha(H_i)=\sum_{i=1}^r \lvert E(H_i)\rvert=\lvert E\rvert.
 \end{equation}
Furthermore, we have
\begin{equation}
\hat{\gamma}(F)=\alpha (G[F])=\sum_{i=1}^r \lvert E(H_i)\cap F\rvert=\lvert F\rvert
\end{equation}
for any $F\subseteq E$.
Let $e\in E$ and $S\subseteq T\subseteq E\backslash \{e\}$.
It follows that
\begin{equation}
\hat{\gamma}(S')-\hat{\gamma}(S)=\lvert S'\rvert-\lvert S\rvert=1
\end{equation}
and
\begin{equation}
\hat{\gamma}(T')-\hat{\gamma}(T)=\lvert T'\rvert-\lvert T\rvert=1,
\end{equation}
where $S'=S\cup \{e\}$ and $T'=T\cup \{e\}$.
Therefore, $\hat{\Gamma}_G$ is additive and hence convex.
\end{proof}
From the proof above, Theorem \ref{thm:CVX_RISGame} can be strengthened as follows.
\begin{corollary}
\label{thm:ADD_RISGame}
Let $G=(V,E)$ be a graph without isolated vertices and $\hat{\Gamma}_G=(E,\hat{\gamma})$ be the relaxed independent set game on $G$.
Then $\hat{\Gamma}_G$ is additive if and only if $G$ is \emph{(}$K_3$,$P_4$\emph{)}-free.
\end{corollary}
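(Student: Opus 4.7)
The plan is to observe that Corollary \ref{thm:ADD_RISGame} is essentially free from what has already been done in the proof of Theorem \ref{thm:CVX_RISGame}, since additivity is just the equality case of convexity (and simultaneously of concavity). I will therefore split into the two implications and reuse the existing machinery rather than prove anything from scratch.

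For the ``only if'' direction, I would start from the elementary observation that every additive game is convex: if $\hat{\gamma}(S)+\hat{\gamma}(T)=\hat{\gamma}(S\cap T)+\hat{\gamma}(S\cup T)$ for all $S,T\subseteq E$, then in particular the defining inequality \eqref{eq:CVX1} holds as an equality. Hence additivity of $\hat{\Gamma}_G$ implies its convexity, and by the ``only if'' part of Theorem \ref{thm:CVX_RISGame} the graph $G$ must be $(K_3,P_4)$-free. This is the cleanest way to recover the forward direction without redoing the $K_3$/$P_4$ case analysis.

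For the ``if'' direction, I would point out that the proof of Theorem \ref{thm:CVX_RISGame} actually established more than convexity. Under the hypothesis that $G$ is $(K_3,P_4)$-free, Okamoto's characterization already gives that every component of $G$ is a star, which forces $\hat{\gamma}(F)=|F|$ for every $F\subseteq E$. But a game whose characteristic function is simply the cardinality of the coalition is patently additive, since $|S|+|T|=|S\cap T|+|S\cup T|$ for any $S,T$. So the ``if'' part of Theorem \ref{thm:CVX_RISGame} proves the stronger statement we need here, and no additional argument is required.

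There is no real obstacle to anticipate; the only subtlety worth flagging is the need to invoke the equality $\hat{\gamma}(F)=|F|$ rather than merely the monotonicity $\hat{\gamma}(F')-\hat{\gamma}(F)=1$ used in the original proof, so that the conclusion is symmetric and applies to arbitrary pairs $S,T$ (not just a chain $S\subseteq T$ with a single added element). The proof will therefore be extremely short, essentially a two-line remark citing Theorem \ref{thm:CVX_RISGame} in both directions.
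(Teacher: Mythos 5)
Your proposal is correct and matches the paper's intent exactly: the paper justifies this corollary with the remark ``From the proof above,'' meaning precisely that the ``if'' direction of Theorem \ref{thm:CVX_RISGame} already establishes $\hat{\gamma}(F)=\lvert F\rvert$ and hence additivity, while the counterexamples in the ``only if'' direction violate convexity and a fortiori additivity. Your packaging of the forward direction as ``additive $\Rightarrow$ convex $\Rightarrow$ $(K_3,P_4)$-free'' is just a cleaner citation of the same facts, so there is nothing to add.
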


Since $(K_3,P_4)$-free graphs can be recognized in polynomial time,
Theorem \ref{thm:CVX_RISGame} and Corollary \ref{thm:ADD_RISGame} suggest that convex (actually additive) instances of relaxed independent set games can be recognized efficiently.

\begin{corollary}
\label{thm:Complexity_RISGame}
  Let $G=(V,E)$ be a graph without isolated vertices and $\hat{\Gamma}_G=(E,\hat{\gamma})$ be the relaxed independent set game on $G$.
  Then convexity (actually additivity) of $\hat{\Gamma}_G$ can be determined in polynomial time.
\end{corollary}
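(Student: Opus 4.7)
The plan is to invoke the structural characterizations established in Theorem \ref{thm:CVX_RISGame} and Corollary \ref{thm:ADD_RISGame}, which reduce both convexity and additivity of $\hat{\Gamma}_G$ to the single graph-theoretic property of being $(K_3,P_4)$-free. Thus the entire algorithmic question collapses to: given $G$, decide in polynomial time whether $G$ contains an induced (equivalently, any) subgraph isomorphic to $K_3$ or $P_4$.

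First, I would note that testing $(K_3,P_4)$-freeness admits an immediate brute-force solution: enumerate all triples and quadruples of vertices of $G=(V,E)$ and check in constant time whether each induces a $K_3$ or $P_4$. This runs in $O(\lvert V\rvert^4)$ time, which is polynomial in the size of the input graph (and hence in the size of the description of $\hat{\Gamma}_G$, whose player set is $E$).

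A cleaner alternative, which I would mention as well, uses the characterization invoked in the proof of Theorem \ref{thm:CVX_RISGame}: Okamoto \cite{Okam03} showed that a graph is $(K_3,P_4)$-free if and only if each of its connected components is a star. This criterion can be checked in linear time $O(\lvert V\rvert+\lvert E\rvert)$: compute the connected components of $G$ (e.g., by breadth-first search), and for each component verify that it is a star by confirming that it contains at most one vertex of degree at least two (equivalently, that removing such a center leaves an edgeless graph).

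I do not anticipate any genuine obstacle, as the result is a direct algorithmic consequence of the combinatorial characterizations already proved. The only mild subtlety to flag is that the input to a cooperative game is formally a set function on $2^E$, but since $\hat{\Gamma}_G$ is specified by the graph $G$, the running time is measured in $\lvert V\rvert$ and $\lvert E\rvert$; both the brute-force and the star-recognition procedures are polynomial in these parameters, which suffices for the claim.
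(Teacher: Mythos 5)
Your proposal is correct and matches the paper's argument: the paper likewise derives the corollary directly from Theorem \ref{thm:CVX_RISGame} and Corollary \ref{thm:ADD_RISGame} together with the fact that $(K_3,P_4)$-freeness is polynomial-time recognizable, via Okamoto's characterization that every component must be a star. Your additional linear-time remark is a harmless refinement of the same approach.
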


\section{Concluding remarks}
\label{sect:conclusions}

We study the convexity of independent set games in this paper.
We show that an independent set game is convex if and only if every non-pendant edge is incident to a pendant edge in the underlying graph.
Since convex games are population monotonic and hence totally balanced, a possible direction for future work is to characterize population monotonicity or total balancedness of independent set games based on the characterization of convexity.
Deng et al. \cite{DengIbar99} proved that an independent set game is balanced if and only if the underlying graph is a K\"{o}nig-Egerv\'{a}ry graph.
Thus any characterization for convexity, population monotonicity or total balancedness of independent set games also provides a sufficient condition for K\"{o}nig-Egerv\'{a}ry graphs.

\section*{Acknowledgments}
The authors are grateful to anonymous referees for valuable comments and helpful suggests that greatly improved the presentation of this paper.

\bibliographystyle{abbrv}
\bibliography{reference}
\end{document}